\documentclass[11pt, a4paper]{article}

\usepackage{a4wide,url}
\usepackage{graphicx,amssymb,amsmath,color,amsthm}

\newtheorem{lemma}{Lemma}
\newtheorem{theorem}[lemma]{Theorem}

\begin{document}
\pagestyle{plain}

\title{Notes on Convex Transversals}
 
\author{
Lena Schlipf\thanks{Institute of Computer Science, Freie Universit\"at Berlin, Germany. {\tt schlipf@mi.fu-berlin.de}. This research was supported by the DFG within the Priority Programme 1307 Algorithm
Engineering. }}

\date{}
\maketitle

\begin{abstract}
In this paper, we prove the problem of stabbing a set of disjoint bends by a convex stabber to be NP-hard. We also consider the optimization version of the convex stabber problem and prove this problem to be APX-hard for sets of line segments.   
\end{abstract}

\section{Introduction}

Consider a finite set of geometric objects in the plane. We call this set stabbable if there exists a convex polygon whose boundary intersects every object. The boundary is then called \emph{convex stabber} or \emph{convex transversal}. 

The problem of finding a convex stabber was originally proposed by Tamir in 1987 \cite{tamir-87}. 
Arkin et al.~\cite{DBLP:conf/wads/ArkinDKMPSY11} proved this problem to be NP-hard when the geometric objects are line segments. They also proved that the problem remains NP-hard when the objects are similar copies of a given convex polygon.
In 3 dimensions, they showed that the problem is NP-hard for a set of balls.

This paper, in fact, can be considered as a continuation of the paper of Arkin et al. 
We will show that the problem of finding a convex stabber for a set of disjoint simple polygons is NP-hard. Actually, we even show that it is already hard to stab a set of disjoint bends. 
Additionally, we study the optimization version: Given a finite set of geometric objects in the plane, compute the maximum number of objects that can be stabbed with the boundary of a convex polygon.
We prove this problem to be APX-hard when the objects are line segments or similar copies of a given convex polygon.

\paragraph{Notation.}
Two line segments that have a common endpoint are called a \emph{bend}. 
We say that a convex stabber stabs or \emph{traverses} the given objects in a specific way.

\section{Convex Stabbers for Disjoint Polygons}\label{section:disjoint}

We consider the following problem
\begin{quote} Given a set of disjoint bends in the plane, is there a convex stabber that intersects every bend of the set?
\end{quote}

We show that this problem is NP-hard. We reduce from planar, monotone 3SAT which was shown to be NP-hard by de~Berg and Khosravi~\cite{DBLP:conf/cocoon/BergK10}. 
A \emph{monotone} instance of 3SAT is an instance where each clause has either only positive or only negatives variables. In the following we call a clause that contains only positive variables a \emph{positive clause} and a clause that contains only negative variables a \emph{negative clause}. De~Berg and Khosravi  \cite{DBLP:conf/cocoon/BergK10} also pointed out that planar monotone 3SAT remains NP-hard when a monotone rectilinear representation is given. In a monotone rectilinear representation the variable and clause gadgets are represented as rectangles. All variable rectangles lie on a horizontal line.  The edges connecting the clause gadgets to the variable gadgets are vertical line segments and no two edges cross. All positive clauses lie above the variables and all negative clauses lie below the variables. See Figure~\ref{monotonerectilinear3SAT} for an example of a monotone rectilinear representation of a planar 3SAT instance. 
\begin{figure}[h]\centering\includegraphics[scale=.8]{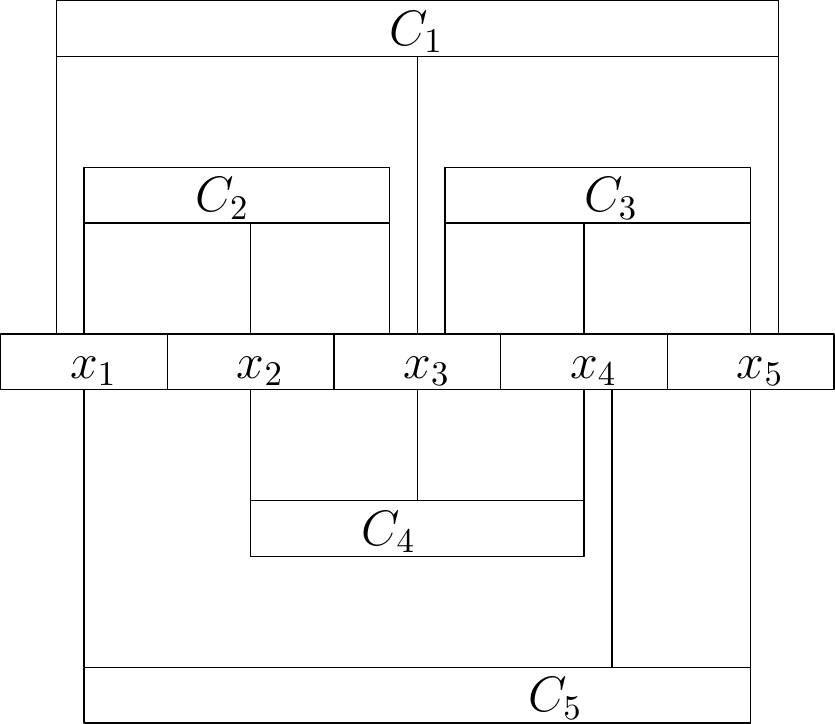}
\caption{A monotone rectilinear representation of the 3SAT instance \mbox{ $C=C_1\wedge C_2\wedge C_3\wedge C_4\wedge C_5$} where \mbox{$C_1=x_1\vee x_3\vee x_5$,} \mbox{$C_2=x_1\vee x_2\vee x_3$,} \mbox{ $C_3=x_3\vee x_4\vee x_5$,} \mbox{$C_4=\overline{x_2}\vee \overline{x_3}\vee \overline{x_4}$,} and \mbox{$C_5=\overline{x_1}\vee \overline{x_4}\vee \overline{x_5}$.}}\label{monotonerectilinear3SAT}\label{fig:3SAT}\end{figure}

Given a monotone rectilinear representation $\phi$ of a 3SAT instance, we construct a set of bends $\mathcal B$ such that there exists a convex stabber for $\mathcal B$ if and only if $\phi$ is satisfiable. 
 Let $m$ be the number of clauses and $n$ be the number of variables. Let the number of positive clauses and negative clauses be $m_1$ and $m_2$, respectively. 
(The variable and the clause gadgets are basically constructed in the same way as in~\cite{DBLP:conf/wads/ArkinDKMPSY11}.)

\paragraph*{Variable gadgets.} A variable gadget consists of a line segment and three points (degenerate bends), see Figure~\ref{fig:VariableGadget}. There are two ways to traverse these points and the segment depending on the order in which the middle point is traversed: one corresponds to setting the variable to True, the other to setting the variable to False.

\begin{figure}[h]\centering\includegraphics[scale=1]{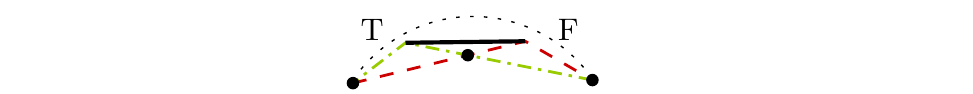}\caption{There are two ways to traverse the gadget: The dashed-dotted path corresponds to setting the variable True, the dashed path corresponds to setting the variable False. Clearly,  the dashed and the dashed-dotted segments and the dotted circular arc are not part of the construction.}\label{fig:VariableGadget}
\end{figure}

The variable gadgets are fitted into a circular arc by putting the non-middle points on the arc. The middle point and the line segment lie inside the circle. Each positive variable gadget is placed into an arc of $1/(16m_1)$ of a unit circle. Each negative variable gadgets is placed into an arc of $1/(16m_2)$ of a unit circle.

\paragraph*{Clause gadgets.}The clause gadgets consists of a line segment and two points (degenerate bends). 
Similarly to the variable gadgets the clause gadgets are fit into a circular arc. The points are put on the arc and the segment lies inside the arc.  The positive clause gadgets are fit into an arc of $1/(16m_1)$ of a unit circle. The negative variable gadgets are fit into an arc of $1/(16m_2)$ of a unit circle.

\paragraph*{Positive arc.} 
We place the gadgets representing a positive variable or a positive clause next to each other on an arc of  a unit circle. Since any positive variable and any positive clause gadget is fit into an arc of $1/(16m_1)$, they occupy an arc of one quarter of a unit circle.
The gadgets have to be placed in a specific order. Consider the monotone rectilinear representation. The clauses and their corresponding variables are connected via edges. We place a gadget for each edge, representing the variable that this edge connects with a clause. The edges have a specific order from left to right and the gadgets are sorted in the same way. Hence, a gadget is placed for every occurrence of a variable in a clause. The clause gadgets are placed to the right of the gadget representing their middle variable, see Figure~\ref{DisjointReduction3}.

\begin{figure}[h]
\centering\includegraphics[scale=.8]{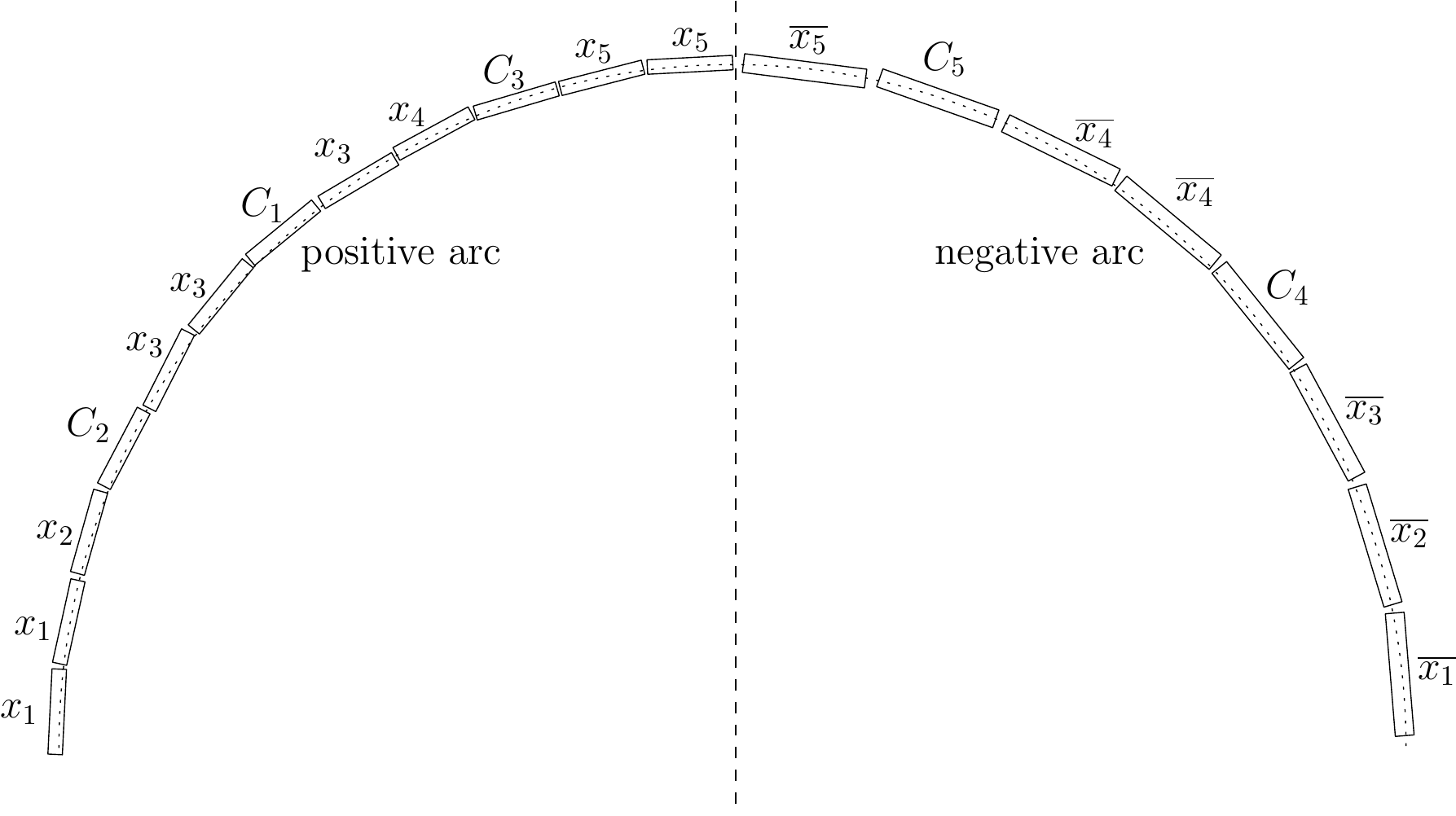}\caption{The placement of the gadgets for the instance in Fig.~\ref{fig:3SAT} is shown.}\label{DisjointReduction3}
\end{figure}

\paragraph*{Negative arc.}
We place the gadgets representing a negative variable or a negative clause next to each other on an arc of one quarter of a unit circle.
The gadgets are ordered in the same way as for the positive arc, but this time from right to left.
The negative and the positive arc are placed next to each other, see Figure~\ref{DisjointReduction3}.

\paragraph*{Variable connectors.} To ensure that a stabber has to traverse all gadgets representing the same variable in the same way, we place $3m$ variable connectors. All variable gadgets that represent the same variable are connected via segments in a circular manner.  
The segment touches the True path of one gadget and the False path of the next gadget (see Figure~\ref{DisjointVariableGadgets}). Since on both positive and negative arc all variable gadgets representing the same variable lie next to each other, these segments do not intersect anything else.
In total, we place one segment for each variable gadget.

\begin{figure}[h]\centering\includegraphics[scale=.55]{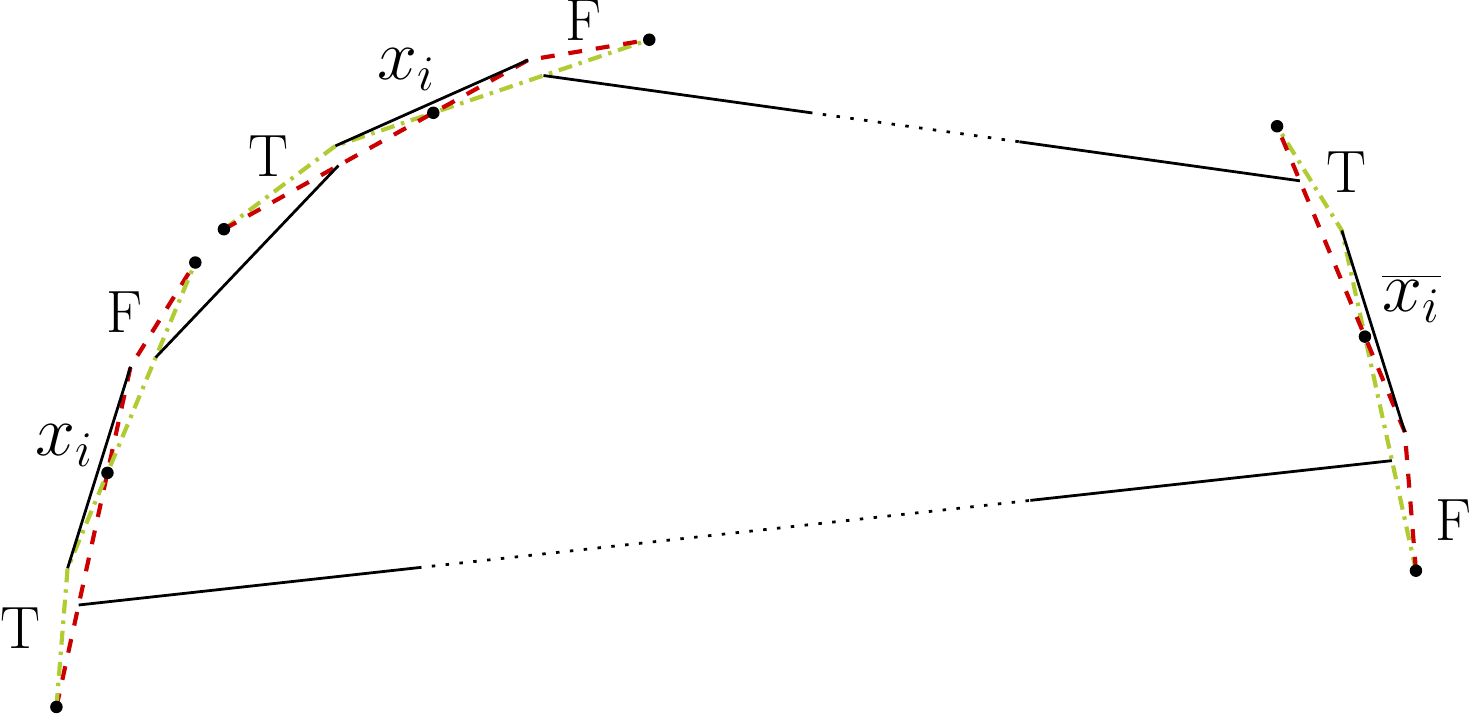}
\caption{The variable connectors are shown. These connectors ensure that each variable gadget that represents the same variable is traversed in the same way: either the stabber traverses the True path or the False path.}\label{DisjointVariableGadgets}\end{figure}

\paragraph*{Clause connectors.} We place $3m$ more bends in order to connect a clause gadget with its variables. Note that there is a variable gadget for each occurrence of a variable in a clause. 
The connectors either lie inside the circle or outside. We call them inner connectors and outer connectors, respectively. An inner connector can be a straight line segment whereas an outer connector has to be a bend.

\begin{figure}\centering\includegraphics[scale=.6]{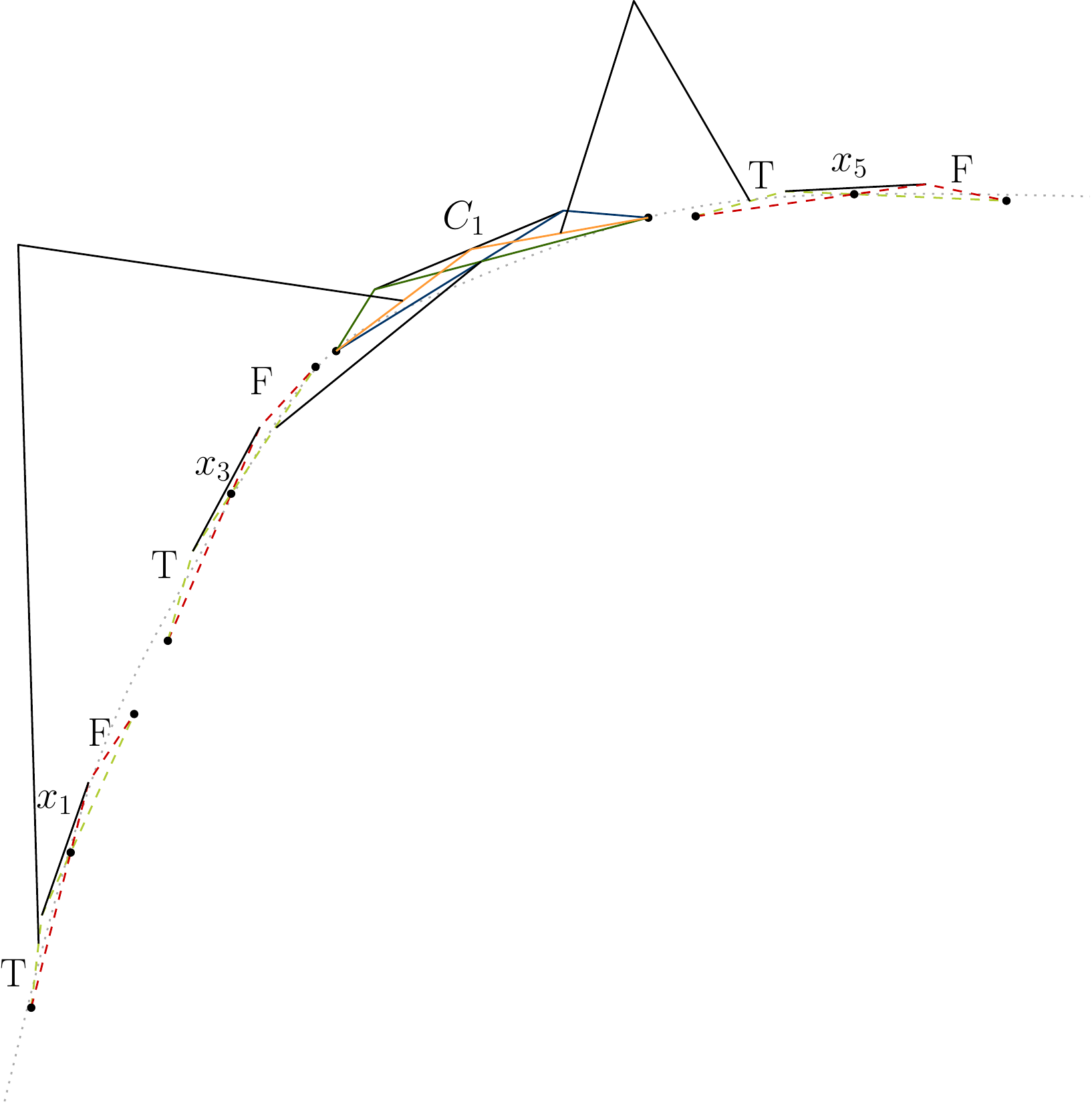}
\caption{The clause connectors are shown. There are three ways to traverse the clause gadget. Each path stabs two out of the three connectors. }\label{DisjointClauseGadgets}\end{figure}

The remaining parts of the construction are explained for positive clauses. Negative clauses can be handled similarly. 
Each clause gadget has two outer connectors and one inner connector. The inner connector connects the clause gadget to the gadget representing its middle variable. Note that this gadget is placed next to the clause gadget in the construction. The other two variables that occur in this clause are connected via outer connectors.
One endpoint of each connector lies within the variable gadget as follows: the segment touches the True path through the gadget and does not intersect the False subpath. In every clause gadget, the endpoints of these connectors look the same -- see Fig.~\ref{DisjointClauseGadgets}. It can easily be checked that a convex stabber can intersect any two of the three bends but never all three of them.
It follows immediately from the monotone rectilinear representation that our construction can be drawn crossing-free, hence the bends and segments are all pairwise disjoint.

\paragraph*{Correctness.}

\begin{figure}
\centering\includegraphics[scale=.8]{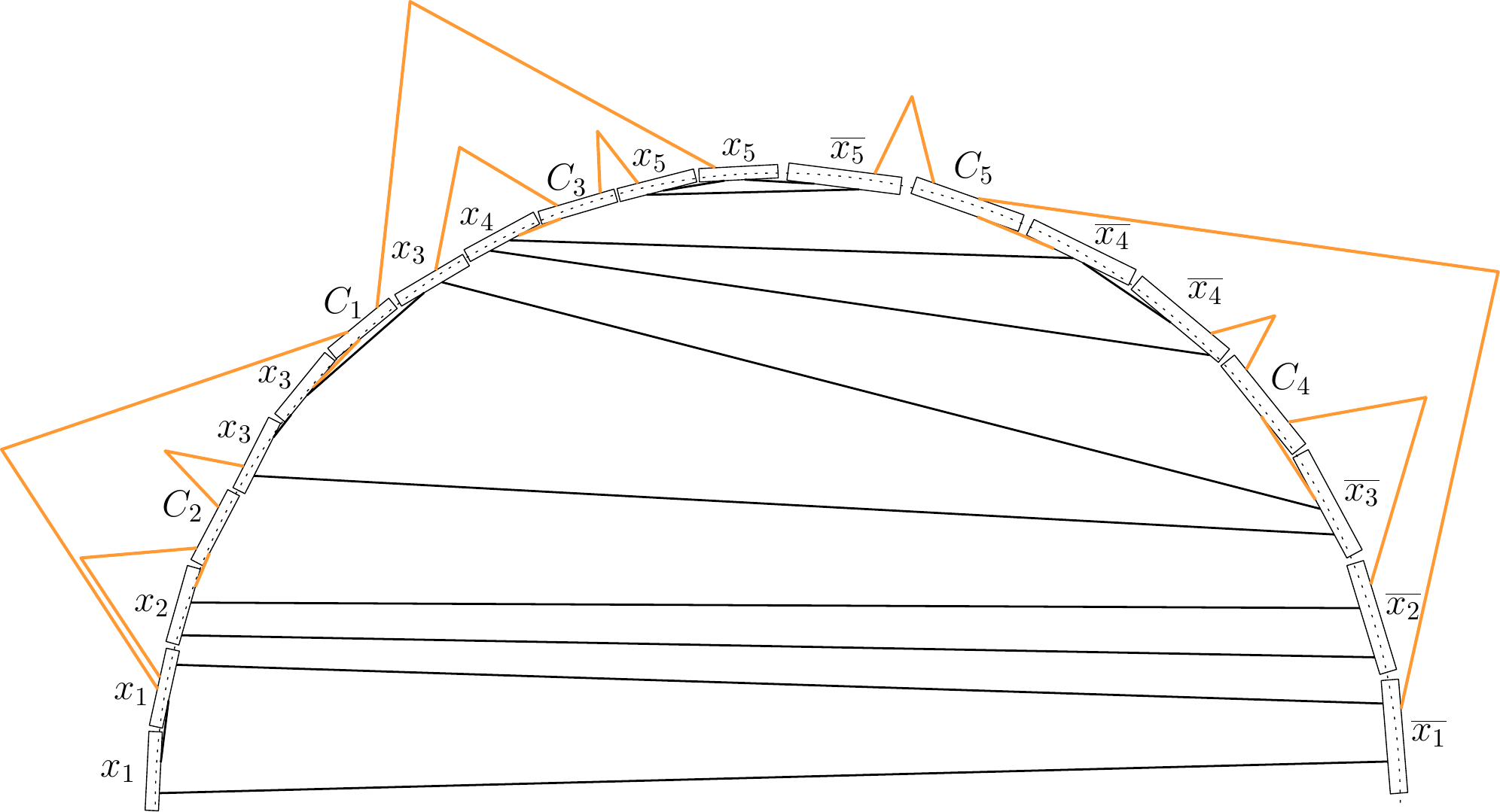}\caption{A sketch of the construction for the instance in Fig.~\ref{fig:3SAT} is shown. The clause connectors are marked in fat and orange.}\label{DisjointReduction}
\end{figure}
Assume there exists a satisfying assignment for the 3SAT formula.  The convex stabber traverses the variable gadgets according to this assignment. In each clause gadget the stabber can stab two connector bends. Since at least one variable is satisfied in each clause, the stabber can omit the connector  connecting the satisfied variable to the clause and stab the other two connectors. 
Hence, the stabber stabs all bends.

On the other hand, assume there is a convex stabber that stabs all bends. Set the variables True or False depending on how the  stabber traverse the variable gadget. This is a satisfying assignment for the 3SAT instance: The setting is consistent since the stabber has to omit at least one connector in each clause gadget. And hence these omitted bends have to be stabbed in the variable gadgets. And there the stabber can either take the True path or the False path, but not both.

\begin{theorem}Let $\mathcal B$ a set of disjoint bends in the plane. It is NP-hard to decide whether there exists a convex stabber for $\mathcal B$.
\end{theorem}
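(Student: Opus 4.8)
The plan is to formalize the construction sketched above into a polynomial-time many-one reduction from planar monotone rectilinear 3SAT --- NP-hard by de~Berg and Khosravi --- to the decision problem ``is $\mathcal B$ stabbable?''. Given an instance $\phi$ together with its monotone rectilinear representation, I would produce $\mathcal B$ exactly as described: fit each positive variable- and clause-gadget into a sub-arc of angular length $1/(16m_1)$ of a fixed unit circle and lay them, in left-to-right order of the edges of the representation (each clause gadget immediately to the right of the gadget for its middle variable), along one quarter of that circle; do the mirror-image placement for the negative gadgets along an adjacent quarter; and add the $3m$ variable connectors (each a segment touching the True subpath of one same-variable gadget and the False subpath of the cyclically next one) together with the $3m$ clause connectors (per clause: one inner segment to the middle-variable gadget and two outer bends to the other two literals, with interior endpoints seated on the True subpaths). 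Planarity and rectilinearity of the representation guarantee the drawing can be made crossing-free, so all objects are pairwise disjoint; there are $O(m)$ bends with coordinates of polynomial bit-length, so the reduction runs in polynomial time.

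The technical core is a structural lemma stating that any convex curve meeting every bend of $\mathcal B$ must, in the neighbourhood of each gadget, pass through all points of that gadget and thereby realize one of the two intended variable traversals (True or False) or one of the three intended clause traversals, and in particular can stab at most two of a clause's three connectors. The argument should exploit that every ``outer'' gadget point lies on the common unit circle while each gadget spans only a $1/(16\max(m_1,m_2))$ fraction of it: a convex stabber forced to visit points spread over (nearly) a half-circle is pinched against that circle, so between consecutive outer points its boundary is essentially the chord, and the sole remaining freedom is on which side of the relevant subpath the middle point is picked up. I expect this to be the main obstacle: one must choose the numeric parameters --- arc lengths, exact positions of the middle points, exact seating of the connector endpoints --- so that simultaneously (i) both intended traversals of each individual gadget are realizable by a convex arc, (ii) no unintended traversal is, and (iii) the local choices always patch together into one globally convex polygon; making (ii) and (iii) rigorous, rather than merely plausible from the figures, is the delicate part.

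Granting the structural lemma, the two implications are straightforward. From a satisfying assignment, traverse each variable gadget accordingly and, within every clause, drop the connector leading to a satisfied literal and stab the remaining two; the resulting curve stays pinched to the unit circle and uses only locally convex traversals, hence is convex and stabs all of $\mathcal B$. Conversely, given a convex stabber, define an assignment by the traversal type of the variable gadgets; the variable connectors make this well defined and consistent across all gadgets of a variable, and since the stabber must omit one connector in every clause gadget, that connector is forced to be stabbed inside the corresponding variable gadget, which pins that variable to the value satisfying the clause. Thus $\mathcal B$ admits a convex stabber if and only if $\phi$ is satisfiable, which proves the theorem.
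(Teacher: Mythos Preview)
Your proposal is correct and follows the paper's approach essentially verbatim: the same reduction from planar monotone rectilinear 3SAT, the same gadget design and arc placement, the same variable and clause connectors, and the same two-direction correctness argument. You are somewhat more explicit than the paper about the structural lemma governing how a convex stabber must traverse each gadget (and about its delicacy), whereas the paper dispatches this with phrases like ``it can easily be checked'', but this is the same argument, not a different one.
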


\section{APX-Hardness}\label{section:apxhardness}

In this section we consider the optimization version of the convex stabber problem which we call the \emph{maximum convex stabber problem}.
\begin{quote}
Given a set of geometric objects in the plane, compute a convex stabber that stabs the maximum number of these objects.
\end{quote}

We prove this problem to be APX-hard if the geometric objects are line segments. 
This result can then be generalized to the problem where the input is a set of similar copies of a given convex polygon.

We use a reduction from MAX-E3SAT(5). MAX-E3SAT(5) is a special version of MAX-3SAT where each clause contains exactly 3 literals and every variable occurs in exactly $5$ clauses and a variable does not  appear in a clause more than once. It is known to be NP-hard to approximate MAX-E3SAT(5) within a factor of $(1+\epsilon)$ for any $\epsilon >0$ \cite{DBLP:journals/jacm/Feige98}.
We first start by reducing MAX-E3SAT(5) to the decision version of the problem (Given a set of line segments and an integer $s$, does there exists a convex stabber that stabs $s$ segments?). 
The reduction is basically the same as the reduction for the problem of finding a convex stabber for line segments in~\cite{DBLP:conf/wads/ArkinDKMPSY11}. 

We show how to build a set of line segments $\mathcal L$, for a 3SAT formula $\phi$, such that there exists a convex stabber for $\mathcal L$ if and only if $\phi$ is satisfiable.
We use $n,m$ to denote the number of the variables and clauses. Here $5n=3m$.

\paragraph*{Variable gadgets.}
For each variable we have a gadget that consists of $6$ line segments and $18$ points (line segments of length 0). 
The line segments are stacked on top of each other. The $18$ points are partitioned into $3$ sets of equal size. The points of each set are stacked on top of each other. The stack of line segments and the $3$ stacks of points are arranged in the same way as in the NP-hardness proof for disjoint bends.
There are two ways to traverse the gadget (and to stab all segments in the gadget). One way corresponds to setting the variable to True, the other to setting the variable to False. Notice that any other way to traverse the gadget cannot stab all segments/points of this gadget. Thus, a stabber traversing any other way stabs at least $6$ segments less than a stabber traversing the True or the False path.

We fit the variable gadget into a circular arc by putting the two non-middle stack of points on the arc. The middle stack of points and the stack of segments lie inside the circle. Each variable gadget is fit into an arc of $1/(4n)$ of a unit circle.

 The variable gadgets are placed next to each other on an arc of one quarter of a unit circle. We call this arc the \emph{variable arc}. 

\paragraph*{Clause gadgets.}
Each clause gadget consists of $4$ segments and $8$ points. The points are partitioned into two sets of equal size, hence each set consists of $4$ points. The segments are stacked on top of each other. The points of a set are also stacked on top of each other. The stack of segments and the stacks of points are arranged in the same way as in the hardness proof for bends.
Each clause gadget is fit into an arc of $1/(4m)$ of a unit circle. The clause gadgets are placed on an arc of one quarter of a unit circle. We call this the \emph{clause arc}. The clause arc is placed next to the variable arc, see Figure~\ref{fig:reduction}. 

\begin{figure}[h!]
\centering
\includegraphics[scale=1.1]{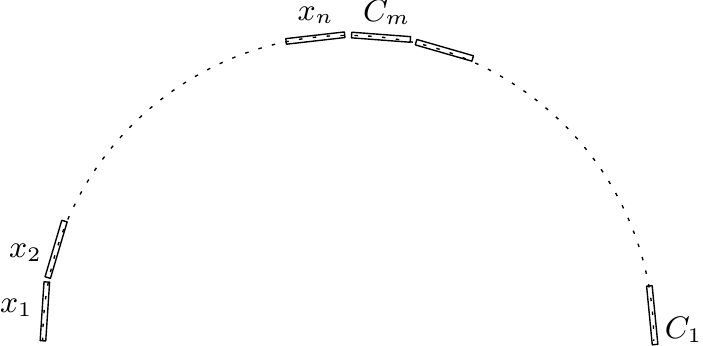}
\caption{Placement of the variable and clause gadgets.}\label{fig:reduction}
\end{figure}

\paragraph*{Connector segments.}
We now place $3m$ connector segments, connecting a variable gadget to a clause gadget whenever the variable appears in the clause. 
Each connector segment consists of a simple line segment. 
A connector segment connects a variable gadget to a clause gadget if the variable appears in the clause. 
The placement of the endpoints of the connector segments within the variable gadgets is as follows: Suppose the variable appears unnegated in the clause. Then the connector segment touches the True path of the gadget and it does not intersect the False path. If the variable appears negated in the clause the segment touches the False path and not the True path. 
The endpoints of the connector segments within the clause gadgets are a bit more involved. We have to ensure that a convex stabber can stab any two of these connector segments in each clause gadget, but not all three.  Figure~\ref{fig:example} shows the placement of the segments endpoints and the three possible ways to traverse the gadget. It can be easily checked that any two of these segments are stabbed by one of the three possible ways. All other possibilities to traverse the gadget stab less than two of the connector segments. Hence there is no way to stab all three connector segments within the clause gadget.

\begin{figure}[h!]
\centering
\includegraphics[scale=0.85]{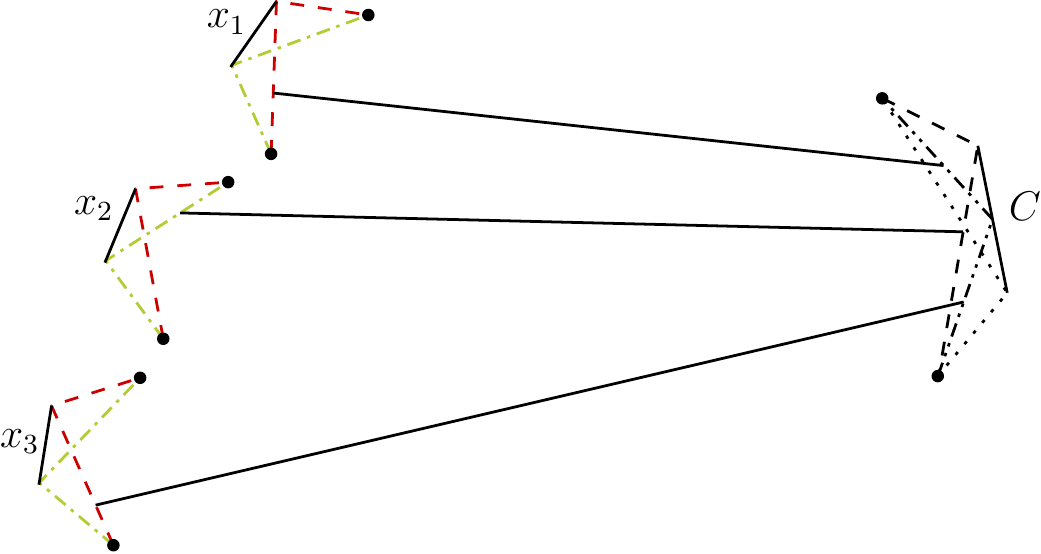}
\caption{Example for a clause $C=\overline{x_1}\vee x_2 \vee \overline{x_3}$.}\label{fig:example}
\end{figure}

Observe that there exists  a convex stabber that stabs all segments of the variable and clause gadgets and at least 2 out of the 3 connector segments of each clause gadget.  Thus, there always exists a stabber that stabs at least $24n+14m$ segments.

\begin{lemma}
There is a convex stabber stabbing $24n+14m+k$ segments if and only if there is an assignment satisfying $k$ clauses.
\end{lemma}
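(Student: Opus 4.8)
The plan is to prove the two directions separately, mirroring the correctness argument already given for the disjoint-bends reduction. For the forward direction, suppose we have an assignment satisfying $k$ clauses. I would construct a convex stabber explicitly: traverse each variable gadget along its True path or False path according to the assignment (this is possible since the variable gadgets are arranged on the variable arc exactly as in Section~\ref{section:disjoint}, so consecutive gadgets can be connected by a locally convex turn), and traverse each clause gadget along one of its three admissible paths. For a satisfied clause, at least one of its three connector segments is incident to a variable gadget on the side (True or False) that the stabber actually uses, so that connector is already stabbed by the variable part; pick the clause-gadget path that stabs the other two connectors, and all three connectors of that clause are stabbed. For an unsatisfied clause, pick any of the three paths and stab exactly two of its connectors. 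This stabber stabs all $6$ segments of each of the $n$ variable gadgets plus the $18$ points ($24n$ total), all $4$ segments plus $8$ points of each of the $m$ clause gadgets ($14m$ total), and at least $2m$ connector segments, with one extra connector for each of the $k$ satisfied clauses — hence $24n+14m+k$ in total. I also need to check convexity of the resulting polygon: this is where the circular-arc placement pays off, since putting every gadget on a quarter-unit-circle arc forces any path that stays ``close'' to the arc to be convex, and the $1/(4n)$, $1/(4m)$ angular budgets leave enough slack for the turns at gadget boundaries.

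For the reverse direction, suppose some convex stabber $S$ stabs $24n+14m+k$ segments. The first step is to argue that $S$ stabs all $24n$ segments/points of the variable gadgets and all $14m$ of the clause gadgets: if $S$ misses even one segment of some gadget, then by the property stated for the gadgets (any traversal other than the two/three canonical ones misses at least $6$ segments in a variable gadget, and similarly misses the canonical count in a clause gadget), $S$ loses at least $6$ relative to a canonical traversal, and since there are only $3m$ connector segments in total and $k\le m$, such a deficit cannot be recovered — one shows $6 > $ the number of connectors a non-canonical path through that gadget could possibly gain elsewhere. So $S$ traverses every variable gadget canonically (True or False) and every clause gadget canonically, and therefore $S$ stabs exactly $2m + (\text{number of clauses all of whose three connectors are stabbed})$ connector segments, giving that the number of fully-connected clauses is at least $k$. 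Finally, read off the assignment from how $S$ traverses the variable gadgets; the variable connectors in the disjoint-bends proof are replaced here by the fact that all occurrences of a variable share a single gadget, so consistency is automatic. A clause all of whose three connectors are stabbed: since a canonical clause-gadget traversal stabs only two of its own three connectors, the third connector must be stabbed inside its variable gadget, which means that variable is set to the literal's polarity — so that clause is satisfied. Hence at least $k$ clauses are satisfied.

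The main obstacle I expect is not the combinatorial bookkeeping but the convex-feasibility verification: one must confirm that the concrete placement of segment endpoints inside each clause gadget really does admit exactly three locally-convex paths, each stabbing exactly two of the three connectors and no path stabbing all three, \emph{and} that these local paths glue together across all $n+m$ gadgets into one global convex curve. The paper defers this to Figure~\ref{fig:example} and the sentence ``It can be easily checked''; in a careful write-up I would make the angular-slack argument precise — each variable gadget consumes an arc of $1/(4n)$ and each clause gadget $1/(4m)$ of a unit circle, the gadgets sit on two quarter-circle arcs, so the total turning of the boundary stays below $2\pi$ and strictly convex, leaving room at each of the finitely many gadget junctions for the required turn. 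The rest (counting $24n+14m$, the ``$6$ is too expensive'' deficit argument, and the literal-polarity reading) is routine once the gadget properties quoted from the construction are taken as given.
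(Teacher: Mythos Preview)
Your approach matches the paper's: construct the stabber from the assignment and count for the forward direction, and for the reverse use a local-deficit argument (a non-canonical traversal of a variable gadget loses at least $6$ gadget segments while at most $5$ connectors can end there, since each variable appears in exactly $5$ clauses in MAX-E3SAT(5)) to force canonical traversal, then read off the assignment. One arithmetic slip to fix: each clause gadget has $4+8=12$ segments, so the $14m$ in the target decomposes as $12m$ clause-gadget segments plus the $2m$ baseline connectors, not $14m$ from the clause gadgets alone.
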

\begin{proof}
If there is an assignment that satisfies $k$ clauses, the stabber traverses the variable gadgets according to the assignment. In each satisfied clause at least one of the connectors is already stabbed, hence the stabber stabs the other two connector segments. 
Thus, it stabs $3$ connector segments for each of the $k$ satisfied clauses and $2$ connector segments for each of the $m-k$ not satisfied clauses. In total the stabber stabs $24n+12m+3k+2(m-k)=24n+14m+k$ segments.
\\
If there is a stabber stabbing $24n+14m+k$ segments, the stabber is forced to traverse the gadgets in \emph{the right order} meaning it either takes the True or False path at each variable gadget and stabs at least two connector segments at each clause gadget. Any stabber traversing the gadgets in any other way stabs less than $24n+14m$ segments. This is because a variable gadget consists of $24$ segments ($18$ of them are segments of length 0), additionally at most $5$ connecting segments have their endpoints inside any variable gadget because each variable appears in at most $5$ clauses. Without traversing the gadget in the right order, a stabber can stab at most $23$ of these segments. A stabber traversing the gadget in the right order stabs at least $24$ segments. For a clause gadget it can be shown that any stabber stabbing the gadget in the right order stabs at least $12$ segments, whereas any other stabber stabs at most $11$ segments.
Thus, a stabber stabs already $24n+14m$ segments by traversing the gadgets in the right order and the $k$ additional segments have to be stabbed in the variable gadgets. Hence, we set the variables according to the stabber traversing these gadgets and so the formula has $k$ satisfied clauses.
\end{proof}

\begin{theorem}
Let $\mathcal L$ be a set of line segments in the plane. It is APX-hard to compute a convex stabber that stabs the maximum number of segments of $\mathcal L$.
\end{theorem}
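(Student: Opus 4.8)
The plan is to upgrade the Lemma into an \emph{L-reduction} from MAX-E3SAT(5), which is APX-hard by the cited inapproximability result. Recall that an L-reduction from a problem $A$ to a problem $B$ is a pair of polynomial-time maps $(f,g)$ together with constants $\alpha,\beta>0$, where $f$ sends instances of $A$ to instances of $B$, $g$ sends feasible solutions of $f(x)$ back to feasible solutions of $x$, $\mathrm{OPT}_B(f(x))\le\alpha\cdot\mathrm{OPT}_A(x)$, and $|\mathrm{OPT}_A(x)-c_A(g(y))|\le\beta\cdot|\mathrm{OPT}_B(f(x))-c_B(y)|$ for every feasible $y$. Here $f$ is exactly the construction of $\mathcal L$ from $\phi$ described above; it is polynomial-time, since each variable gadget, clause gadget and connector uses a constant number of segments and there are $O(n+m)$ of them. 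Writing $s^\ast(\phi)$ for the maximum number of simultaneously satisfiable clauses of $\phi$, the Lemma says precisely that $\mathrm{OPT}(\mathcal L)=24n+14m+s^\ast(\phi)$.

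Next I would specify $g$. Given a convex stabber $S$ stabbing $v$ segments of $\mathcal L$: if $v\ge 24n+14m$, then (as shown inside the proof of the Lemma) $S$ is forced to traverse every gadget ``in the right order'', hence induces a consistent truth assignment $\tau_S$ --- the True or False path chosen at each variable gadget --- and $\tau_S$ satisfies at least $v-24n-14m$ clauses; we set $g(S)=\tau_S$. If $v<24n+14m$ we let $g(S)$ be any fixed assignment. In both cases $g(S)$ is clearly computable in polynomial time.

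What remains is to verify the two L-reduction inequalities, and here the one observation that actually matters is that the additive term $24n+14m$ must be swallowed by a constant multiple of $s^\ast(\phi)$. Since $5n=3m$ we have $n=3m/5$ and thus $24n+14m=\tfrac{142}{5}m$; on the other hand a uniformly random assignment satisfies $\tfrac78 m$ clauses in expectation, so $s^\ast(\phi)\ge\tfrac78 m$. Hence $\mathrm{OPT}(\mathcal L)=\tfrac{142}{5}m+s^\ast(\phi)\le\bigl(1+\tfrac{142/5}{7/8}\bigr)s^\ast(\phi)=\tfrac{1171}{35}\,s^\ast(\phi)$, giving the first inequality with $\alpha=\tfrac{1171}{35}$. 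For the second, let $S$ stab $v$ segments and write $N(\tau)$ for the number of clauses of $\phi$ satisfied by an assignment $\tau$. If $v\ge 24n+14m$ then $0\le s^\ast(\phi)-N(\tau_S)\le s^\ast(\phi)-(v-24n-14m)=\mathrm{OPT}(\mathcal L)-v$. If $v<24n+14m$ then $\mathrm{OPT}(\mathcal L)-v>\mathrm{OPT}(\mathcal L)-(24n+14m)=s^\ast(\phi)\ge s^\ast(\phi)-N(g(S))\ge 0$. In both cases $|s^\ast(\phi)-N(g(S))|\le|\mathrm{OPT}(\mathcal L)-v|$, so $\beta=1$ works.

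Because L-reductions preserve the existence of a PTAS and MAX-E3SAT(5) admits none unless $\mathrm P=\mathrm{NP}$, the maximum convex stabber problem for line segments admits none unless $\mathrm P=\mathrm{NP}$; since it obviously lies in NPO, it is APX-hard. The real difficulty (designing the gadgets and proving they behave as claimed) is already behind us in the Lemma; the main point requiring care here is the case distinction in the definition of $g$ together with the use of the linear lower bound $s^\ast(\phi)\ge\tfrac78 m$ (any bound $s^\ast(\phi)=\Omega(m)$ would do) to dominate the large additive constant $24n+14m$ --- without it the reduction would fail to be an L-reduction. Finally, the same argument goes through after replacing each point and segment of $\mathcal L$ by a sufficiently small similar copy of the given convex polygon, so that no convex curve can stab more of these copies than it could of the original segments; this yields APX-hardness for similar copies of a fixed convex polygon as well.
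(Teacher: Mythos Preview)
Your proof is correct and follows essentially the same route as the paper: both use the Lemma together with the bound $s^\ast(\phi)\ge\tfrac{7}{8}m$ and the identity $24n+14m=\tfrac{142}{5}m$ to show that the additive shift is absorbed by a constant factor of the MAX-E3SAT(5) optimum. The only difference is presentational---you package this as an explicit L-reduction with $\alpha=\tfrac{1171}{35}$, $\beta=1$ and a carefully case-split back-map $g$, whereas the paper argues directly that a $(1+\epsilon)$-approximation for the stabber problem yields a $(1+\epsilon')$-approximation for MAX-E3SAT(5) with $\epsilon'=\tfrac{1171}{35}\epsilon$; the underlying arithmetic and the appeal to the Lemma are identical.
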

\begin{proof}
We use a PTAS-reduction from MAX-E3SAT(5).  Let $n$ be the number of variables and $m$ be the number of clauses. We reduce the problem to the convex stabber problem as explained before. Let $k$ be the maximum number of satisfied clauses. Since there always exists an assignment satisfying at least $7/8$ of the clauses it holds $k\geq 7/8m$. Assume there exists a polynomial-time algorithm for maximum convex stabber problem that returns a solution that is at least $(1+\epsilon)$ times of the optimal solution. Then we can approximate MAX-E3SAT(5) by subtracting  $24n+14m$ (note that $3m=5n$):
\begin{eqnarray*}(1+\epsilon)(24n+14m+k)-24n-14m&=&k+\epsilon k +142/5 m\epsilon\leq k+\epsilon k+1136/35\epsilon k\\
&\leq&(1+\epsilon')k 
\end{eqnarray*}
\end{proof}

This construction can be generalized to similar copies of a given convex polygon in the same way as the NP-hardness proof for finding a convex stabber line segments can be generalized to similar copies of a given convex polygon in \cite{DBLP:conf/wads/ArkinDKMPSY11}. So we get the following result.
\begin{theorem}
Let $\mathcal P$ be a set of scaled copies of a convex polygon in the plane. It is APX-hard to compute a convex stabber that stabs the maximum number of segments of $\mathcal P$.
\end{theorem}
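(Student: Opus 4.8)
The plan is to recycle the line-segment reduction from the previous theorem essentially verbatim, substituting for each line segment and each point an appropriately scaled (and translated) copy of the given convex polygon $P$, exactly as the segment version of the NP-hardness proof is promoted to the polygon version in \cite{DBLP:conf/wads/ArkinDKMPSY11}. Points (degenerate segments) are the easy case: a copy of $P$ scaled down by a sufficiently small factor $\delta>0$ is contained in a $\delta\,\mathrm{diam}(P)$-neighbourhood of its center, so a convex stabber stabs such a copy essentially iff its boundary passes within distance $\delta\,\mathrm{diam}(P)$ of that center; choosing $\delta$ small compared to all the combinatorial clearances in the gadgets makes a tiny copy of $P$ indistinguishable from a point for every argument used in the correctness proof. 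Each genuine line segment of the gadgets (the six per variable gadget, the four per clause gadget) is replaced by a short row of such tiny copies of $P$ strung densely along the segment: a convex stabber must hit every copy of the row, which—because the row is thin and its convex hull is essentially the original segment—forces the stabber to ``cross'' where that segment used to be, and conversely any stabber crossing there can be perturbed to hit the whole row. Each of the $3m$ connector segments is replaced by a single tiny copy of $P$. The resulting finite family $\mathcal P$ of scaled copies of $P$ is the input; all circles and arcs used in the description are, as before, merely auxiliary.

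Next I would re-establish the analogue of the counting Lemma. After the substitution each variable gadget contributes a fixed number $a$ of copies and each clause gadget a fixed number $b$ of copies (the exact values depending only on how many tiny copies we put in each row, not on $P$), plus the $3m$ connector copies. The two structural facts that made the original proof work must be checked to survive: (i) a variable gadget admits exactly two canonical traversals—the True path and the False path—each stabbing all $a$ of its copies, while every other traversal of the stabber through that region misses at least one copy; and (ii) in every clause gadget a convex stabber can stab any two of the three connector copies but never all three, and any traversal that is not one of the three canonical ones stabs at most one connector. Both statements are local: they concern only a constant-size arc of the construction whose only non-degenerate objects are rows of tiny copies lying within $O(\delta)$ of the segments they replace, and the connector copies lying within $O(\delta)$ of the original connector endpoints. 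Hence for $\delta$ small enough the same convexity arguments (a convex curve cannot re-enter a halfplane it has left; a convex curve meeting three suitably placed small ``fat points'' in general position cannot meet a fourth) go through unchanged, since the original configuration satisfies these properties strictly. With (i) and (ii) the Lemma becomes: there is a convex stabber stabbing $an+bm+k$ copies if and only if there is an assignment of $\phi$ satisfying $k$ clauses—one direction traverses the gadgets according to the assignment, stabbing three connectors in each satisfied clause and two in each unsatisfied one, and the other direction reads the assignment off a maximum stabber, using that every non-canonical traversal of a gadget loses at least one copy, so $an+bm$ copies can only be reached by canonical traversals everywhere.

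Finally I would invoke the same PTAS-reduction from MAX-E3SAT(5) used in the line-segment case. Since $5n=3m$ and, by the probabilistic argument, the optimum $k$ of the SAT instance satisfies $k\ge 7m/8$, the additive term $an+bm$ is $O(k)$; so a $(1+\epsilon)$-approximate maximum convex stabber, after subtracting $an+bm$, yields a $(1+\epsilon')$-approximation of MAX-E3SAT(5) with $\epsilon'\to 0$ as $\epsilon\to 0$, exactly as in the computation of the line-segment theorem, and APX-hardness follows from the inapproximability of MAX-E3SAT(5) \cite{DBLP:journals/jacm/Feige98}.

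I expect the main obstacle to be a careful verification of (i) and (ii) that is \emph{uniform in the given polygon $P$}: one must fix $\delta$, the spacing inside each row of copies, and the placement of the connector copies so that (a) all copies stay pairwise disjoint and the ``inside vs.\ outside the circle'' incidences are unchanged, (b) the clause gadget genuinely still forbids stabbing all three connectors for an arbitrary convex $P$—the copies being small but fat, a convex stabber has slightly more freedom than against three points, and one has to argue this extra freedom is too small to reach the third connector—and (c) every deviation from a canonical traversal costs at least one stabbed copy, since the PTAS-reduction needs this $\Omega(1)$ gap per gadget. Everything else—the arithmetic of the reduction and both directions of the counting lemma—then follows as in the line-segment proof.
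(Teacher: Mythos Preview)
Your high-level strategy---reuse the line-segment reduction and then port it to polygons exactly as Arkin et~al.\ do for the NP-hardness result---is precisely what the paper does (the paper's own proof is a single sentence pointing to that generalisation). Where your proposal goes wrong is in the concrete substitution you choose for the segments.

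The serious gap is your treatment of the $3m$ \emph{connector} segments. In the reduction a connector is a long chord with one endpoint inside a variable gadget and the other inside a clause gadget, and the whole counting lemma hinges on the fact that the stabber may hit it near \emph{either} end: near the variable gadget when the corresponding True/False path is taken, or near the clause gadget as one of the two-out-of-three that can be stabbed there. A single tiny copy of $P$ sits in only one of these two neighbourhoods, so one of the two stabbing options disappears. If you place the copy at the clause end you always stab exactly $2m$ connector copies regardless of the assignment; if you place it at the variable end you count satisfied \emph{literal occurrences}, not satisfied clauses. Either way the equivalence ``$an+bm+k$ stabbed iff $k$ clauses satisfied'' collapses, and with it the PTAS-reduction.

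A second, smaller gap is the ``row of tiny copies'' for the six (resp.\ four) gadget segments. A convex curve meets the supporting line of such a row in at most two points, so a stabber crossing the segment transversally---which is exactly what both the True and the False path do---hits only the copies near its crossing point and misses the rest. Your claim that ``any stabber crossing there can be perturbed to hit the whole row'' is false; you cannot force a convex curve to sweep a thin row by local perturbation.

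The fix, and what the Arkin et~al.\ generalisation actually does, is the opposite of shrinking: replace each line segment (points, gadget segments, and connectors alike) by a \emph{single} scaled copy of $P$ whose diameter lies along that segment, so that stabbing the copy is essentially equivalent to crossing the segment. The nontrivial work is then to verify that the width of these copies---which is now proportional to the segment length times the aspect ratio of $P$---does not create unintended incidences; this is where the circular layout and the clearance between gadgets are used. Once you make that substitution, the counting lemma and the final $(1+\epsilon)\mapsto(1+\epsilon')$ calculation go through verbatim.
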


Since there exists a PTAS for \emph{planar} MAX-SAT \cite{DBLP:conf/stoc/KhannaM96}, we cannot use these ideas to show APX-hardness for a set of disjoint bends.

\subsection*{Acknowledgment}
I would like to thank Esther M. Arkin and Joseph S. B. Mitchell for pointing out the problem about disjoint objects.
I also want to thank Wolfgang Mulzer for mentioning the optimization problem.

{\small \bibliographystyle{abbrv} \bibliography{paper}}

\begin{thebibliography}{1}

\bibitem{DBLP:conf/wads/ArkinDKMPSY11}
E.~M. Arkin, C.~Dieckmann, C.~Knauer, J.~S.~B. Mitchell, V.~Polishchuk,
  L.~Schlipf, and S.~Yang.
\newblock Convex transversals.
\newblock In F.~Dehne, J.~Iacono, and J.-R. Sack, editors, {\em WADS}, volume
  6844 of {\em Lecture Notes in Computer Science}, pages 49--60. Springer,
  2011.

\bibitem{DBLP:conf/cocoon/BergK10}
M.~de~Berg and A.~Khosravi.
\newblock Optimal binary space partitions in the plane.
\newblock In M.~T. Thai and S.~Sahni, editors, {\em COCOON}, volume 6196 of
  {\em Lecture Notes in Computer Science}, pages 216--225. Springer, 2010.

\bibitem{DBLP:journals/jacm/Feige98}
U.~Feige.
\newblock A threshold of ln {\it n} for approximating set cover.
\newblock {\em J. ACM}, 45(4):634--652, 1998.

\bibitem{DBLP:conf/stoc/KhannaM96}
S.~Khanna and R.~Motwani.
\newblock Towards a syntactic characterization of {PTAS}.
\newblock In G.~L. Miller, editor, {\em STOC}, pages 329--337. ACM, 1996.

\bibitem{tamir-87}
A.~Tamir.
\newblock Problem 4-2 (New York University, Dept. of Statistics and Operations
  Research), Problems Presented at the Fourth NYU Computational Geometry Day
  (3/13/87).

\end{thebibliography}

\end{document}